\newtheorem{remark}{Remark}
\newtheorem{proposition}{Proposition}
\newcommand{\mathsym}[1]{}
\newcommand{\unicode}[1]{}
\begin{document}

\title{\color{BlueViolet} A Short Note on P-Value Hacking}
\author{\color{BlueViolet} Nassim Nicholas Taleb\\
Tandon School of Engineering\\
\thanks{Second version, January 2018, First version was March 2015.}
}
\maketitle
\thispagestyle{fancy}
\flushbottom 

\begin{abstract}
We present the expected values from p-value hacking as a choice of the minimum p-value among $m$ independents tests, which can be considerably lower than the "true" p-value, even with a single trial, owing to the extreme skewness of the meta-distribution.
 
We first present an exact probability distribution (meta-distribution) for p-values across ensembles of statistically identical phenomena. We derive the distribution for small samples $2<n \leq n^*\approx 30$ as well as the limiting one as the sample size $n$ becomes large. We also look at the properties of the "power" of a test through the distribution of its inverse for a given p-value and parametrization.

The formulas allow the investigation of the stability of the reproduction of results and "p-hacking" and other aspects of meta-analysis. 

P-values are shown to be extremely skewed and volatile, regardless of the sample size $n$, and vary greatly across repetitions of exactly same protocols under identical stochastic copies of the phenomenon; such volatility makes the minimum $p$ value diverge significantly from the "true" one. Setting the power is shown to offer little remedy unless sample size is increased markedly or the p-value is lowered by at least one order of magnitude.

\end{abstract}


\IEEEPARstart{P}-VALUE hacking, just like an option or other members in the class of convex payoffs, is a function that benefits from the underlying variance and higher moment variability. The researcher or group of researchers have an implicit "option" to pick the most favorable result in $m$ trials, without disclosing the number of attempts, so we tend to get a rosier picture of the end result than reality. The distribution of the minimum p-value and the "optionality" can be made explicit, expressed in a parsimonious formula allowing for the understanding of biases in scientific studies, particularly under environments with high publication pressure.

Assume that we know the "true" p-value, 
 $p_s$, what would its realizations look like across various attempts on statistically identical copies of the phenomena? By true value $p_s$, we mean its expected value by the law of large numbers across an $m$ ensemble of possible samples for the phenomenon under scrutiny, that is $ \frac{1}{m} \sum_{\leq m} p_i \xrightarrow{P} p_s$ (where $\xrightarrow{P} $ denotes convergence in probability). A similar convergence argument can be also made for the corresponding "true median" $p_M$. The distribution of $n$ small samples can be made explicit (albeit with special inverse functions), as well as its parsimonious limiting one for $n$ large, with no other parameter than the median value $p_M$. We were unable to get an explicit form for $p_s$ but we go around it with the use of the median. 
\begin{figure}[h!]
\includegraphics[width=\linewidth]{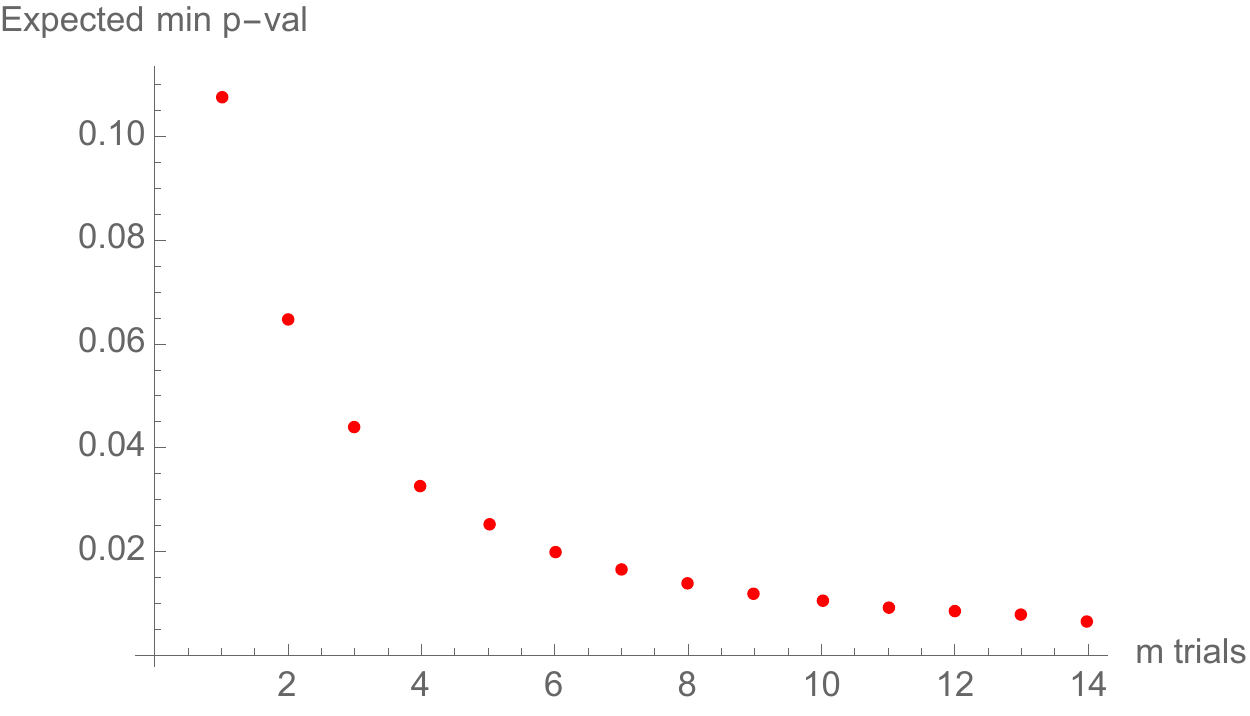}
\caption{The "p-hacking" value across $m$ trials for the "true" median p-value $p_M=.15$ and expected "true" value $p_s=.22$. We can observe how easily one can reach spurious values $<.02$ with a small number of trials.}\label{hacking}
\end{figure}

 \begin{figure}[h!]
\includegraphics[width=\linewidth]{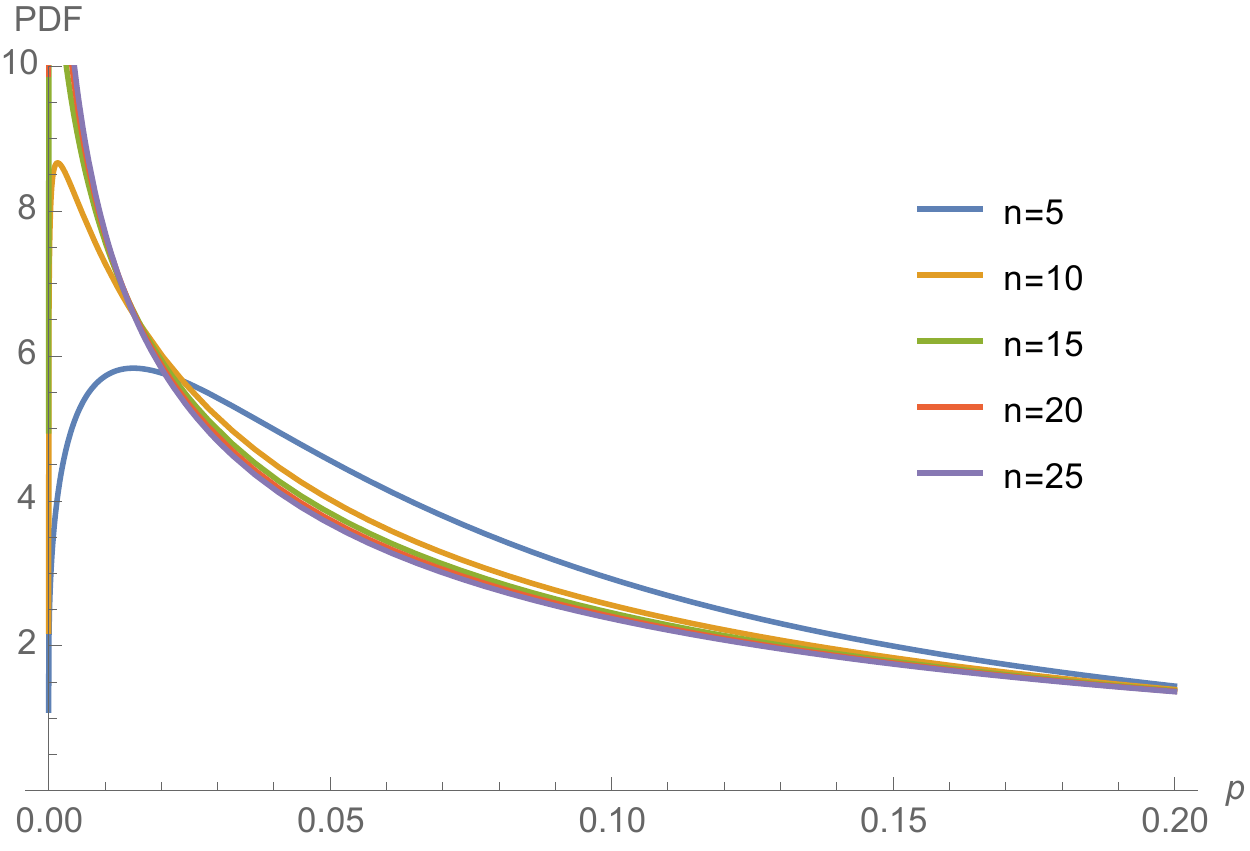}
\caption{The different values for Equ. \ref{firsteq} showing convergence to the limiting distribution. }\label{pval2}
\end{figure}

 It turns out, as we can see in Fig. \ref{pval1} the distribution is extremely asymmetric (right-skewed), to the point where 75\% of the realizations of a "true" p-value of .05 will be <.05 (a borderline situation is $3 \times$ as likely to pass than fail a given protocol), and, what is worse, 60\% of the true p-value of .12 will be below .05. This implies serious gaming and "p-hacking" by researchers, even under a moderate amount of repetition of experiments. 
 \begin{mdframed}[backgroundcolor=yellow!40]
 Although with compact support, the distribution exhibits the attributes of extreme fat-tailedness. For an observed p-value of, say, .02, the "true" p-value is likely to be >.1 (and very possibly close to .2), with a standard deviation >.2 (sic) and a mean deviation of around .35 (sic, sic). Because of the excessive skewness, measures of dispersion in $L^1$ and $L^2$  (and higher norms) vary hardly with $p_s$, so the standard deviation is not proportional, meaning an in-sample $.01$ p-value has a significant probability of having a true value $>.3$. 
  
  \textbf{So clearly we don't know what we are talking about when we talk about p-values.}
 \end{mdframed}

  Earlier attempts for an explicit meta-distribution in the literature were found in \cite{hung1997behavior} and \cite{sackrowitz1999p}, though for situations of Gaussian subordination and less parsimonious parametrization. 
The severity of the problem of \textit{significance of the so-called "statistically significant"} has been discussed in 
\cite{gelman2006difference} and offered a remedy via Bayesian methods in \cite{johnson2013revised}, which in fact recommends the same tightening of standards to p-values $\approx .01$. But the gravity of the extreme skewness of the distribution of p-values is only apparent when one looks at the meta-distribution.

 For notation, we use $n$ for the sample size of a given study and $m$ the number of trials leading to a p-value.
\section{derivation of the metadistribution of p-values}

\begin{proposition}\label{proposition1}
Let $P$ be a random variable $\in [0,1])$ corresponding to the sample-derived one-tailed p-value from the paired T-test statistic (unknown variance) with median value $\mathbb{M}(P)=p_M \in [0,1]$ derived from a sample of $n$  size. The distribution  across the ensemble of statistically identical copies of the sample has for PDF  
$$
\varphi(p;p_M)= \begin{cases}
 	\varphi(p;p_M)_L & \text{for  } p<\frac{1}{2}\\
 	\varphi(p;p_M)_H & \text{for  } p>\frac{1}{2}\\ 	
 \end{cases}
$$

\begin{multline*}
\varphi(p;p_M)_L=
\lambda _p^{\frac{1}{2} (-n-1)}\\
 \sqrt{-\frac{\lambda _p \left(\lambda _{p_M}-1\right)}{\left(\lambda _p-1\right) \lambda _{p_M}-2
   \sqrt{\left(1-\lambda _p\right) \lambda _p} \sqrt{\left(1-\lambda _{p_M}\right) \lambda _{p_M}}+1}} \\
   \left(\frac{1}{\frac{1}{\lambda
   _p}-\frac{2 \sqrt{1-\lambda _p} \sqrt{\lambda _{p_M}}}{\sqrt{\lambda _p} \sqrt{1-\lambda _{p_M}}}+\frac{1}{1-\lambda
   _{p_M}}-1}\right)^{n/2} 
   \end{multline*}
      \begin{multline}
      \varphi(p;p_M)_H=
   \left(1-\lambda '_p\right){}^{\frac{1}{2} (-n-1)} \\
   \left(\frac{\left(\lambda '_p-1\right) \left(\lambda _{p_M}-1\right)}{\lambda '_p
   \left(-\lambda _{p_M}\right)+2 \sqrt{\left(1-\lambda '_p\right) \lambda '_p} \sqrt{\left(1-\lambda _{p_M}\right) \lambda
   _{p_M}}+1}\right){}^{\frac{n+1}{2}} \label{firsteq}
   \end{multline}
\normalfont
where 
$\lambda _p=I_{2 p}^{-1}\left(\frac{n}{2},\frac{1}{2}\right)$,
$\lambda_{p_M}=I_{1-2 p_M}^{-1}\left(\frac{1}{2},\frac{n}{2}\right)$, $\lambda '_p=I_{2 p-1}^{-1}\left(\frac{1}{2},\frac{n}{2}\right)$, and $I^{-1}_{(.)}(.,.)$ is the inverse beta regularized function.
\end{proposition}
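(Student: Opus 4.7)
My plan is to reduce the proposition to a one-dimensional change of variables from the T-statistic to the p-value; everything beyond this step is algebraic substitution and repackaging.

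First I would pass from $p$ to $t$. The classical identity
\[
\mathbb{P}(T' > t) = \tfrac{1}{2}\,I_{n/(n+t^{2})}\!\left(\tfrac{n}{2},\tfrac{1}{2}\right) \qquad (t>0),
\]
with the symmetric version for $t<0$, makes the one-tailed p-value a strictly decreasing function $P=g(T)$. Inverting $p=g(t)$ gives $t(p)=+\sqrt{n(1-\lambda_p)/\lambda_p}$ for $p<1/2$ and $t(p)=-\sqrt{n\lambda'_p/(1-\lambda'_p)}$ for $p>1/2$, which is precisely the piecewise split of the statement. The same inversion applied to $p_M$, combined with the beta symmetry $I_x(a,b)=1-I_{1-x}(b,a)$, identifies the median T-value $t_M=\sqrt{n\lambda_{p_M}/(1-\lambda_{p_M})}$.

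Next I would specify the law of $T$ under the alternative by taking $T-t_M$ to be central Student-t with density $f(s)\propto(1+s^{2}/n)^{-(n+1)/2}$; this is the natural one-parameter location shift of the null, has the correct null limit at $t_M=0$, and by monotonicity of $g$ is calibrated so that the induced median of $P$ equals $p_M$. Since $g'(t)=-f(t)$, the Jacobian of the change of variables is $|dT/dP|=1/f(t(p))$, and the density of $P$ collapses to the likelihood ratio
\[
\varphi(p) \;=\; \frac{f(t(p)-t_M)}{f(t(p))} \;=\; \left(\frac{n+t(p)^{2}}{n+(t(p)-t_M)^{2}}\right)^{(n+1)/2}.
\]
Substituting $t^{2}=n(1-\lambda_p)/\lambda_p$, $t_M^{2}=n\lambda_{p_M}/(1-\lambda_{p_M})$, and $\sqrt{(1-\lambda_p)\lambda_p(1-\lambda_{p_M})\lambda_{p_M}}=n|tt_M|/[(n+t^{2})(n+t_M^{2})]$ into the expansion of $n+(t(p)-t_M)^{2}$ and regrouping yields the announced $\varphi_L$; the sign of $t(p)$ on the two branches controls the sign of the cross term and produces $\varphi_H$ from the same computation, with $\lambda_p$ replaced by $\lambda'_p$.

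The main obstacle is not the probabilistic step but the algebraic repackaging. Specifically, one must verify the key identity
\[
(\lambda_p-1)\lambda_{p_M}-2\sqrt{(1-\lambda_p)\lambda_p(1-\lambda_{p_M})\lambda_{p_M}}+1 \;=\; \frac{n\,[\,n+(t-t_M)^{2}\,]}{(n+t^{2})(n+t_M^{2})},
\]
together with its plus-sign analogue for the $p>1/2$ branch; these are what collapse the apparently baroque square-root factors in $\varphi_L$ and $\varphi_H$ to the clean $\sqrt{n/(n+(t-t_M)^{2})}$, after which the exponents match automatically and both branches reduce to $((n+t(p)^{2})/(n+(t(p)-t_M)^{2}))^{(n+1)/2}$, completing the proof.
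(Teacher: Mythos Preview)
Your proposal is correct and follows essentially the same route as the paper: model the test statistic as a location-shifted central Student-$t$ with shift $\bar\zeta=t_M$, push forward through the monotone map $p=g(t)$ via the standard Jacobian formula, and then reparametrize the shift by the induced median $p_M=g(t_M)$ using the inverse regularized beta. Your version is in fact more explicit than the paper's, which simply writes ``Replacing we get Eq.~\ref{firsteq}'' at the algebra stage; your intermediate likelihood-ratio form $\varphi(p)=\bigl((n+t(p)^{2})/(n+(t(p)-t_M)^{2})\bigr)^{(n+1)/2}$ and the key identity you isolate are exactly what that replacement entails, so nothing is missing.
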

\begin{remark}
For p=$\frac{1}{2}$ the distribution doesn't exist in theory, but does in practice and we can work around it with the sequence $p_{m_k}=\frac{1}{2} \pm\frac{1}{k}$, as in the graph showing a convergence to the Uniform distribution on $[0,1]$ in Figure \ref{pval2}. Also note that what is called the "null" hypothesis is effectively a set of measure 0. 
\end{remark}
\begin{proof}
Let $Z$ be a random normalized variable with realizations $\zeta$, from a vector  $\vec{v}$ of $n$ realizations, with sample mean $m_v$, and sample standard deviation $s_v$, $\zeta= \frac{m_v-m_h}{\frac{s_v} {\sqrt{n}}}$ (where $m_h$ is the level it is tested against), hence assumed to $\thicksim $ Student T with $n$ degrees of freedom, and, crucially, supposed to deliver a mean of $\bar{\zeta}$, $$f(\zeta;\bar{\zeta})=\frac{\left(\frac{n}{(\bar{\zeta}-\zeta )^2+n}\right)^{\frac{n+1}{2}}}{\sqrt{n} B\left(\frac{n}{2},\frac{1}{2}\right)}$$ where B(.,.) is the standard beta function. Let $g(.)$ be the one-tailed survival function of the Student T distribution with zero mean and $n$ degrees of freedom: 
$$g(\zeta)=  \mathbb{P}(Z>\zeta)=
\begin{cases}
 \frac{1}{2} I_{\frac{n}{\zeta^2+n}}\left(\frac{n}{2},\frac{1}{2}\right) & \zeta\geq 0 \\
 \frac{1}{2} \left(I_{\frac{\zeta^2}{\zeta^2+n}}\left(\frac{1}{2},\frac{n}{2}\right)+1\right) & \zeta <0
 \end{cases}$$
where $I_{(.,.)}$ is the incomplete Beta function.

We now look for the distribution of $g \circ f(\zeta)$. 
Given that g(.) is a legit Borel function, and naming $p$ the probability as a random variable, we have by a standard result for the transformation:
 
$$\varphi(p,\bar{\zeta})=\frac{f\left(g^{(-1)}(p)\right)}{|g'\left(g^{(-1)}(p)\right)|}$$

We can convert $\bar{\zeta}$ into the corresponding median survival probability because of symmetry of $Z$. Since one half the observations fall on either side of $\bar{\zeta}$, we can ascertain that the transformation is median preserving: $g(\bar{\zeta})=\frac{1}{2}$, hence $\varphi(p_M,.) =\frac{1}{2}$. Hence we end up having  $\{\bar{\zeta}:\frac{1}{2} I_{\frac{n}{\bar{\zeta}^2+n}}\left(\frac{n}{2},\frac{1}{2}\right)=p_M \}$ (positive case) and $\{\bar{\zeta}:\frac{1}{2} \left(I_{\frac{\zeta^2}{\zeta^2+n}}\left(\frac{1}{2},\frac{n}{2}\right)+1\right) =p_M\}$ (negative case). Replacing we get Eq.\ref{firsteq} and Proposition \ref{proposition1} is done.


\end{proof}

We note that $n$ does not increase significance, since p-values are computed from normalized variables (hence the universality of the meta-distribution); a high $n$ corresponds to an increased convergence to the Gaussian. For large $n$, we can prove the following proposition:

\begin{proposition} Under the same assumptions as above, the limiting distribution for $\varphi(.)$:
\begin{equation}
\lim_{n\to \infty} \varphi(p;p_M)=e^{ -\text{erfc}^{-1}(2 p_M) \left(\text{erfc}^{-1}(2 p_M)-2 \text{erfc}^{-1}(2 p)\right)}\label{equationgauss}
	\end{equation}
where erfc(.) is the complementary error function and $erfc(.)^{-1}$ its inverse.\label{proposition2}

The limiting CDF $\Phi(.)$ 
\begin{equation}
	\Phi(k;p_M)=\frac{1}{2} \text{erfc}\left(\text{erf}^{-1}(1-2 k)-\text{erf}^{-1}(1-2 p_M)\right)
\end{equation}

\end{proposition}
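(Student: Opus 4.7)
The plan is to pass to the $n\to\infty$ limit inside the representation $\varphi(p;\bar\zeta)=f(g^{-1}(p);\bar\zeta)/|g'(g^{-1}(p))|$ established in the proof of Proposition \ref{proposition1}, rather than to wrestle with the $n\to\infty$ limit of the explicit expressions involving $I^{-1}_{(\cdot)}(\cdot,\cdot)$ directly. As $n\to\infty$ the Student-T density with shifted mean converges (pointwise, and uniformly on compacts) to the Gaussian $\frac{1}{\sqrt{2\pi}}e^{-(\zeta-\bar\zeta)^2/2}$, and the one-tailed survival function $g(\zeta)$ converges to $\tfrac12\mathrm{erfc}(\zeta/\sqrt2)$, whose derivative is $-\frac{1}{\sqrt{2\pi}}e^{-\zeta^2/2}$. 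A pleasant by-product is that the two branches $\varphi_L,\varphi_H$ of Proposition \ref{proposition1} merge: the Gaussian and the representation $g(\zeta)=\tfrac12\mathrm{erfc}(\zeta/\sqrt2)$ are smooth across $\zeta=0$, and $\mathrm{erfc}^{-1}$ extends analytically across $\tfrac12$.

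Next I would invert $g$ to obtain $g^{-1}(p)=\sqrt2\,\mathrm{erfc}^{-1}(2p)$, and invoke the median-preserving identity $g(\bar\zeta)=\tfrac12$ (used already in the proof of Proposition \ref{proposition1}) to write $\bar\zeta=\sqrt2\,\mathrm{erfc}^{-1}(2p_M)$. Substituting into the ratio $f/|g'|$, the $1/\sqrt{2\pi}$ normalizations cancel and the exponent collapses to
\begin{equation*}
-\tfrac12(\zeta-\bar\zeta)^2+\tfrac12\zeta^2=\bar\zeta\,\zeta-\tfrac12\bar\zeta^2,
\end{equation*}
so that after reinsertion of $\zeta=\sqrt2\,\mathrm{erfc}^{-1}(2p)$ and $\bar\zeta=\sqrt2\,\mathrm{erfc}^{-1}(2p_M)$, and factoring $\mathrm{erfc}^{-1}(2p_M)$ out of the bracket, one recovers exactly the formula \eqref{equationgauss}.

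For the CDF it is cleanest not to integrate \eqref{equationgauss} but to argue back through the change of variables. In the limit $P=g(Z)$ with $Z\sim\mathcal{N}(\bar\zeta,1)$, and since $g$ is strictly decreasing,
\begin{equation*}
\Phi(k;p_M)=\mathbb{P}(P\le k)=\mathbb{P}(Z\ge g^{-1}(k))=\tfrac12\,\mathrm{erfc}\!\left(\tfrac{1}{\sqrt2}(g^{-1}(k)-\bar\zeta)\right),
\end{equation*}
and the identity $\mathrm{erfc}^{-1}(2x)=\mathrm{erf}^{-1}(1-2x)$ converts this to the stated form.

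The main obstacle is really bookkeeping rather than analysis: tracking the two piecewise branches of Proposition \ref{proposition1} and verifying that they do coalesce into a single smooth expression under the Gaussian limit, and justifying the interchange of the $n\to\infty$ limit with the continuous operation $f/|g'|\circ g^{-1}$. The latter follows from continuous-mapping together with the uniform Student-T-to-Gaussian convergence of $f$ and $g$ on compact subsets of the relevant $\zeta$-range, and the non-vanishing of $|g'|$ on any compact set bounded away from $\pm\infty$; no subtlety beyond standard dominated-convergence-style control is needed.
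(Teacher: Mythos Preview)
Your proposal is correct and follows essentially the same route as the paper: pass to the Gaussian limit for $f$ and $g$, use $g^{-1}(p)=\sqrt{2}\,\mathrm{erfc}^{-1}(2p)$ together with the median relation $\bar\zeta=g^{-1}(p_M)$, and substitute into the change-of-variables formula $f(g^{-1}(p))/|g'(g^{-1}(p))|$. Your treatment is considerably more careful than the paper's three-line sketch---in particular you justify the interchange of limit and transformation, note the coalescence of the two branches, and supply a derivation of the CDF (which the paper states but does not derive)---but the underlying argument is the same.
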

\begin{proof}
For large $n$, the distribution of $Z= \frac{m_v}{\frac{s_v} {\sqrt{n}}}$ becomes that of a Gaussian, and the one-tailed survival function 
$g(.)=\frac{1}{2}\text{erfc}\left(\frac{\zeta }{\sqrt{2}}\right)$,
$\zeta(p) \to \sqrt{2} \text{erfc}^{-1}(p)$.	
\end{proof}

 \begin{figure}[h!]
\includegraphics[width=\linewidth]{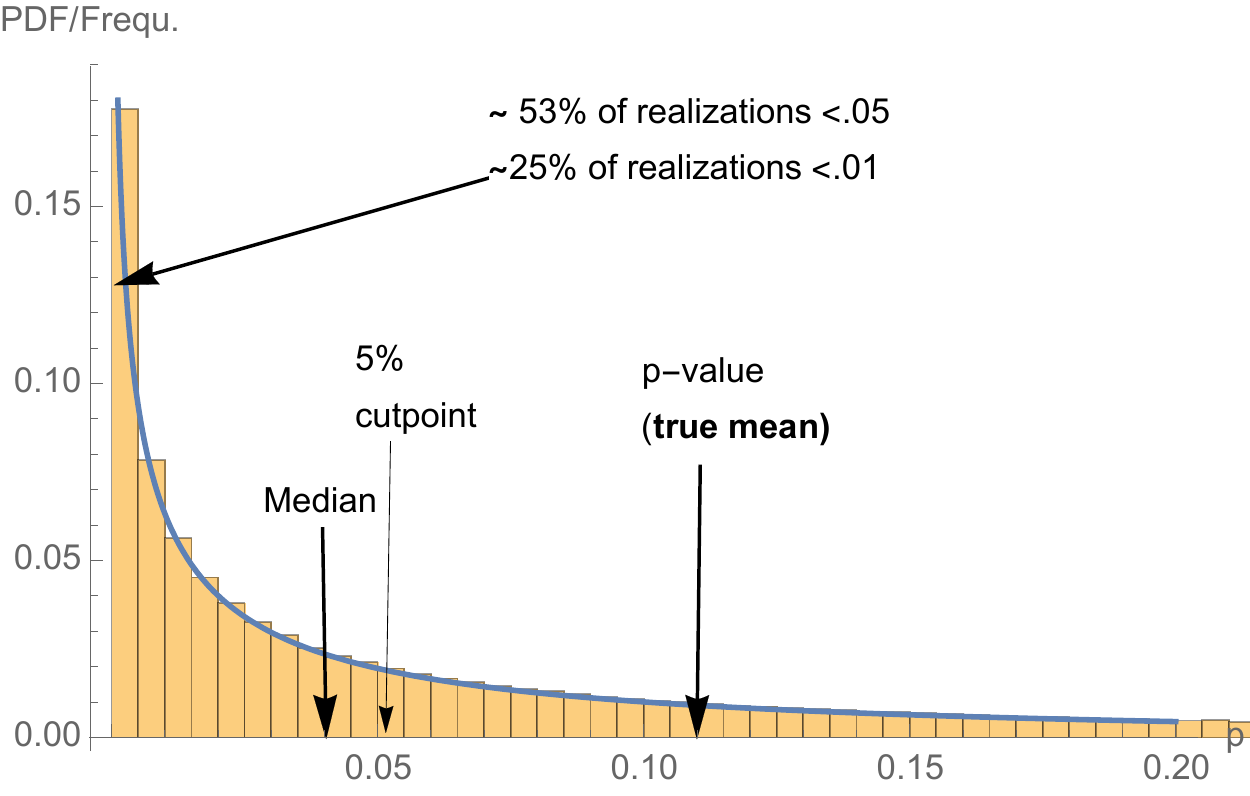}
\caption{The probability distribution of a one-tailed p-value with expected value .11 generated by Monte Carlo (histogram) as well as analytically with $\varphi(.)$ (the solid line). We draw all possible subsamples from an ensemble with given properties. The excessive skewness of the distribution makes the average value considerably higher than most observations, hence causing illusions of "statistical significance".}\label{pval1}
\end{figure}

\begin{figure}[h!]
\includegraphics[width=\linewidth]{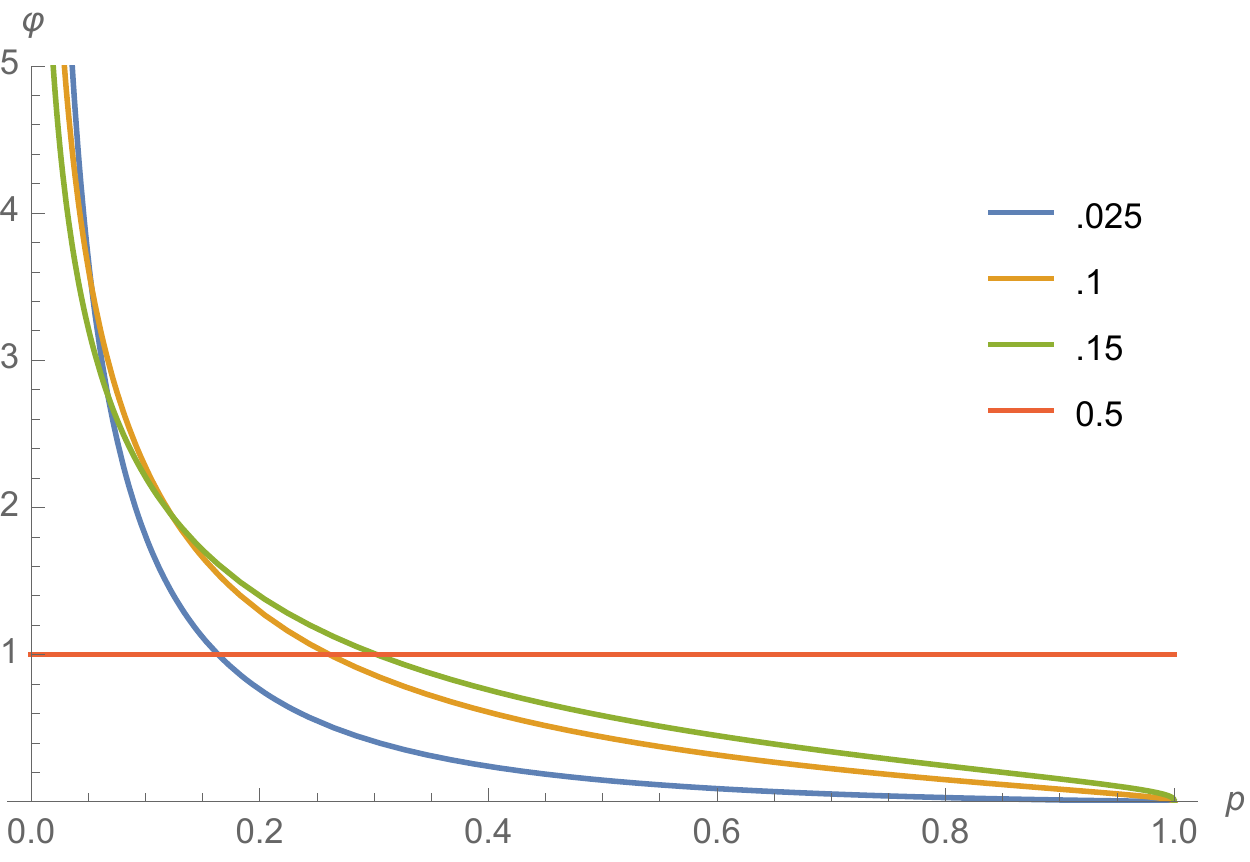}
\caption{The probability distribution of p at different values of $p_M$. We observe how $p_M = \frac{1}{2}$ leads to a uniform distribution.}\label{pval2}
\end{figure}

This limiting distribution applies for paired tests with known or assumed sample variance since the  test becomes a Gaussian variable, equivalent to the convergence of the T-test (Student T) to the Gaussian when $n$ is large.
\begin{remark}
For values of $p$ close to 0, $\varphi$ in Equ. \ref{equationgauss} can be usefully calculated as:
\begin{multline}
\varphi(p;p_M) =\sqrt{2 \pi } p_M \sqrt{\log \left(\frac{1}{2 \pi  p_M^2}\right)}\\
 e^{\sqrt{-\log \left(2 \pi  \log \left(\frac{1}{2 \pi  p^2}\right)\right)-2 \log (p)} \sqrt{-\log \left(2 \pi  \log \left(\frac{1}{2 \pi  p_M^2}\right)\right)-2 \log \left(p_M\right)}}\\
 +O(p^2).
\end{multline}
The approximation works more precisely for the band of relevant values $0<p<\frac{1}{2\pi}$.\end{remark}

From this we can get numerical results for convolutions of $\varphi$ using the Fourier Transform or similar methods.
\section{P-Value Hacking}
We can and get the distribution of the minimum p-value per $m$ trials across statistically identical situations thus get an idea of "p-hacking", defined as attempts by researchers to get the lowest p-values of many experiments, or try until one of the tests produces statistical significance. 
\begin{proposition}

The distribution of the minimum of $m$ observations of statistically identical p-values becomes (under the limiting distribution of proposition \ref{proposition2}):
\begin{multline}
\varphi_m(p;p_M)=m \,e^{\text{erfc}^{-1}(2 p_M) \left(2 \text{erfc}^{-1}(2 p)-\text{erfc}^{-1}(2 p_M)\right)}\\
 \left(1-\frac{1}{2} \text{erfc}\left(\text{erfc}^{-1}(2 p)-\text{erfc}^{-1}(2 p_M)\right)\right)^{m-1}
\end{multline}	
\end{proposition}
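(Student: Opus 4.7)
The plan is to recognize the claim as a direct application of the standard order-statistics formula for the minimum of i.i.d.\ samples, then plug in the limiting PDF and CDF from Proposition \ref{proposition2} and reconcile the two notations (inverse error function vs.\ inverse complementary error function).

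First I would write down the standard result: if $P_1,\dots,P_m$ are i.i.d.\ with common PDF $\varphi(\,\cdot\,;p_M)$ and CDF $\Phi(\,\cdot\,;p_M)$, then $\mathbb{P}\bigl(\min_i P_i > p\bigr) = \bigl(1-\Phi(p;p_M)\bigr)^m$, so differentiating and negating gives
$$\varphi_m(p;p_M) = m\,\varphi(p;p_M)\bigl(1-\Phi(p;p_M)\bigr)^{m-1}.$$
This already explains the leading $m$, the exponential factor, and the $(m-1)$th power in the stated expression. The setup presumes the $m$ trials correspond to statistically identical copies of the phenomenon, i.e.\ i.i.d.\ draws from the meta-distribution of Proposition \ref{proposition2}, which the text has set up.

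Next I would substitute. For the $\varphi(p;p_M)$ factor, use formula \eqref{equationgauss} directly, noting that $-\operatorname{erfc}^{-1}(2p_M)\bigl(\operatorname{erfc}^{-1}(2p_M)-2\operatorname{erfc}^{-1}(2p)\bigr)$ is the same quantity as $\operatorname{erfc}^{-1}(2p_M)\bigl(2\operatorname{erfc}^{-1}(2p)-\operatorname{erfc}^{-1}(2p_M)\bigr)$ that appears in the claim. For the CDF factor, apply the identity $\operatorname{erf}^{-1}(1-y) = \operatorname{erfc}^{-1}(y)$, which converts
$$\Phi(p;p_M)=\tfrac12\operatorname{erfc}\!\bigl(\operatorname{erf}^{-1}(1-2p)-\operatorname{erf}^{-1}(1-2p_M)\bigr)$$
into $\tfrac12\operatorname{erfc}\!\bigl(\operatorname{erfc}^{-1}(2p)-\operatorname{erfc}^{-1}(2p_M)\bigr)$, whose complement is exactly the parenthesized base of the $(m-1)$th power in the target formula.

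The only non-bookkeeping step is this last notational reconciliation between the $\operatorname{erf}^{-1}$ parametrization of $\Phi$ and the $\operatorname{erfc}^{-1}$ parametrization of $\varphi$, which is what I would flag as the main (if minor) obstacle; everything else is a one-line invocation of the minimum-order-statistic formula combined with the i.i.d.\ assumption implicit in the phrase \emph{statistically identical}.
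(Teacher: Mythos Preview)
Your proposal is correct and follows exactly the same approach as the paper: the paper's proof is the one-line observation that $\mathbb{P}(p_1>p,\ldots,p_m>p)=\bar{\Phi}(p)^m$ followed by differentiation, which is precisely your order-statistics formula $\varphi_m=m\,\varphi\,(1-\Phi)^{m-1}$. Your additional care in reconciling the $\operatorname{erf}^{-1}$ and $\operatorname{erfc}^{-1}$ parametrizations is a detail the paper leaves implicit.
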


\begin{proof}

$P\left(p_1>p , p_2>p, \ldots,p_m>p\right)= \bigcap_{i=1}^n  \Phi(p_i)= \bar{\Phi}(p)^m$. Taking the first derivative  we get the result. 
\end{proof}
Outside the limiting distribution: we integrate numerically for different values of m as shown in figure \ref{hacking}. So, more precisely, for $m$ trials, the expectation is calculated as: 
	 $$\mathbb{E}(p_{min})= \int_0^1-m \; \varphi (p;p_M) \left(\int_0^p \varphi (u,.) \, \,\mathrm{d}u\right)^{m-1} \mathrm{d}p$$

\section{Other Derivations}
\subsection*{Inverse Power of Test}
Let $\beta$ be the power of a test for a given p-value $p$, for random draws X from unobserved parameter $\theta$ and a sample size of $n$. To gauge the reliability of $\beta$ as a true measure of power, we perform an inverse problem: 

\begin{tikzpicture}[every node/.style={midway}]
\matrix[column sep={4em,between origins},
        row sep={2em}] at (0,0)
{ \node(R)   {$\beta$}  ; & \node(S) {$X_{\theta,p,n}$}; \\
  \node(R/I) {$\beta^{-1}(X)$};                   \\};
\draw[<->] (R/I) -- (R) node[anchor=east]  {$\Delta$};
\draw[<-] (R/I) -- (S) node[anchor=west]  {$$};
\draw[->] (R)   -- (S) node[anchor=east] {$$};
\end{tikzpicture}

\begin{proposition}
	Let $\beta_c$ be the projection of the power of the test from the realizations assumed to be student T distributed and evaluated under the parameter $\theta$. We have
$$
\Phi  (\beta _c)= \begin{cases}
 	\Phi  (\beta _c)_L & \text{for  } \beta _c<\frac{1}{2}\\
 	\Phi  (\beta _c)_H & \text{for  } \beta _c>\frac{1}{2}\\ 	
 \end{cases}
$$
where
\begin{multline}
\Phi  (\beta _c)_L=\sqrt{1-\gamma _1} \gamma _1^{-\frac{n}{2}} \\
\frac{\left(-\frac{\gamma _1}{2 \sqrt{\frac{1}{\gamma _3}-1} \sqrt{-\left(\gamma _1-1\right) \gamma _1}-2 \sqrt{-\left(\gamma _1-1\right) \gamma _1}+\gamma _1 \left(2 \sqrt{\frac{1}{\gamma _3}-1}-\frac{1}{\gamma _3}\right)-1}\right){}^{\frac{n+1}{2}}}{\sqrt{-\left(\gamma _1-1\right) \gamma _1}}	
\end{multline}
\begin{multline}
\Phi  (\beta _c)_H=	 \sqrt{\gamma _2} \left(1-\gamma _2\right)^{-\frac{n}{2}} B\left(\frac{1}{2},\frac{n}{2}\right)\\
	 \frac{ \left(\frac{1}{\frac{-2 \left(\sqrt{-\left(\gamma _2-1\right) \gamma _2}+\gamma _2\right) \sqrt{\frac{1}{\gamma _3}-1}+2 \sqrt{\frac{1}{\gamma _3}-1}+2 \sqrt{-\left(\gamma _2-1\right) \gamma _2}-1}{\gamma _2-1}+\frac{1}{\gamma _3}}\right){}^{\frac{n+1}{2}}}{\sqrt{-\left(\gamma _2-1\right) \gamma _2} B\left(\frac{n}{2},\frac{1}{2}\right)} 
\end{multline}
where
$\gamma _1= I_{2 \beta _c}^{-1}\left(\frac{n}{2},\frac{1}{2}\right)$,
$\gamma _2=I_{2 \beta _c-1}^{-1}\left(\frac{1}{2},\frac{n}{2}\right)$, and
$\gamma _3=I_{\left(1,2 p_s-1\right)}^{-1}\left(\frac{n}{2},\frac{1}{2}\right)$.
\end{proposition}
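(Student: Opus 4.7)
The plan is to follow the same template as Proposition~\ref{proposition1}: represent the sample-derived power $\beta_c$ as an explicit monotone transformation of a Student~$T$ random variable, fix the centering of that variable from the true p-value $p_s$, and then apply the change-of-variables formula $\varphi(y)=f(h^{-1}(y))/|h'(h^{-1}(y))|$ on each branch of monotonicity.

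First, I would reuse the notation of the earlier proof: the normalized sample statistic $Z$ is Student~$T$ on $n$ degrees of freedom with density $f(\zeta;\bar\zeta)$ shifted by the true effect location $\bar\zeta$. Because $p_s$ is the expected one-tailed p-value under the unobserved parameter $\theta$, the shift is recovered by solving $g(\bar\zeta)=p_s$ with $g$ the Student~$T$ survival function already introduced. Inverting on the appropriate branch is exactly what produces the parameter $\gamma_3=I^{-1}_{(1,2p_s-1)}(n/2,1/2)$ in the statement, and it is the only ingredient carrying the dependence on $\theta$.

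Next, I would write $\beta_c=h(Z)$, where $h$ is the probability, under a Student~$T$ centred at the observed statistic, of exceeding the critical cutoff determined by the nominal significance level~$p$. Because $h$ is again built from the Student~$T$ survival function, it inherits exactly the piecewise structure that split $\varphi(p;p_M)$ into a low branch ($\beta_c<\tfrac12$) and a high branch ($\beta_c>\tfrac12$); inverting $h$ on each branch yields $\gamma_1=I^{-1}_{2\beta_c}(n/2,1/2)$ and $\gamma_2=I^{-1}_{2\beta_c-1}(1/2,n/2)$ respectively. Applying the change-of-variables formula then produces, on each branch, a numerator of the form $f(h^{-1}(\beta_c);\bar\zeta(\gamma_3))$, which explains the $(n+1)/2$ exponents together with the $\gamma_1^{-n/2}$ and $(1-\gamma_2)^{-n/2}$ factors, while the Jacobian $|g'\circ h^{-1}|$ contributes the square-root terms $\sqrt{-(\gamma_i-1)\gamma_i}$ together with the $B(\tfrac12,\tfrac n2)$ normalisations coming from differentiating the regularized incomplete beta.

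The median-preserving step of Proposition~\ref{proposition1}—the transformation carries $\bar\zeta$ to $\tfrac12$ by the symmetry of $Z$—pins down both branches and guarantees their compatibility at $\beta_c=\tfrac12$. The main obstacle is purely algebraic bookkeeping rather than any new probabilistic idea: the cross-terms $2\sqrt{1/\gamma_3-1}\,\sqrt{-(\gamma_i-1)\gamma_i}$ emerge only after one carefully expands $(\bar\zeta-\zeta)^2$ in the closed form of the Student~$T$ density and substitutes the three inverse-beta expressions, and the two branches use the arguments of $B$ in the opposite order, which is very easy to transpose by mistake. A useful final sanity check is the $n\to\infty$ Gaussian limit, which should collapse both branches to the Gaussian inverse-power transformation in direct parallel with how Proposition~\ref{proposition2} is obtained from Proposition~\ref{proposition1}.
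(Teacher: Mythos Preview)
The paper does not supply a proof of this proposition at all: Proposition~4 is stated and the text moves directly to the concluding section. So there is no ``paper's own proof'' to compare against. Your outline is exactly the natural argument one infers the author intended, namely a straight replay of the proof of Proposition~\ref{proposition1} with the survival function $g$ replaced by the power map $h$ and with the centering parameter read off from $p_s$ rather than $p_M$; the three inverse regularized incomplete beta substitutions $\gamma_1,\gamma_2,\gamma_3$ arise precisely as you say, from inverting $h$ on each branch and inverting $g$ to locate $\bar\zeta$.

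One small point worth tightening: you write that the shift is fixed by $g(\bar\zeta)=p_s$, but recall from the paper's own discussion before Proposition~\ref{proposition1} that $g(\bar\zeta)$ equals the \emph{median} of the induced distribution, not its mean, because the transformation is median-preserving by symmetry of $Z$. The appearance of $p_s$ (rather than $p_M$) in $\gamma_3$ is therefore a notational choice of the author rather than a consequence of the change-of-variables step, and your argument should not claim that solving $g(\bar\zeta)=p_s$ is forced by the construction. Apart from that, the algebraic bookkeeping you flag --- the expansion of $(\bar\zeta-\zeta)^2$ producing the cross terms in $\sqrt{1/\gamma_3-1}$ and the branch-dependent ordering of the beta arguments --- is indeed the only real work, and your $n\to\infty$ sanity check is a sensible safeguard.
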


\section{Application and Conclusion}

\begin{itemize}
\item One can safely see that under such stochasticity for the realizations of p-values and the distribution of its minimum, to get what a scientist would expect from a 5\% confidence level (and the inferences they get from it), one needs a p-value of at least one order of magnitude smaller. 
\item Attempts at replicating papers, such as the open science project \cite{open2015estimating}, should consider a margin of error in \textit{its own} procedure and a pronounced bias towards favorable results (Type-I error). There should be no surprise that a previously deemed significant test fails during replication --in fact it is the replication of results deemed significant at a close margin that should be surprising. 
\item The "power" of a test has the same problem unless one either lowers p-values or sets the test at higher levels, such at .99.
\end{itemize}

\section*{Acknowledgment}
Marco Avellaneda, Pasquale Cirillo, Yaneer Bar-Yam, friendly people on twitter, less friendly verbagiastic psychologists on twitter, ...
\bibliographystyle{IEEEtran}
\bibliography{/Users/nntaleb/Dropbox/Central-bibliography}

\begin{thebibliography}{1}
\providecommand{\url}[1]{#1}
\csname url@samestyle\endcsname
\providecommand{\newblock}{\relax}
\providecommand{\bibinfo}[2]{#2}
\providecommand{\BIBentrySTDinterwordspacing}{\spaceskip=0pt\relax}
\providecommand{\BIBentryALTinterwordstretchfactor}{4}
\providecommand{\BIBentryALTinterwordspacing}{\spaceskip=\fontdimen2\font plus
\BIBentryALTinterwordstretchfactor\fontdimen3\font minus
  \fontdimen4\font\relax}
\providecommand{\BIBforeignlanguage}[2]{{%
\expandafter\ifx\csname l@#1\endcsname\relax
\typeout{** WARNING: IEEEtran.bst: No hyphenation pattern has been}%
\typeout{** loaded for the language `#1'. Using the pattern for}%
\typeout{** the default language instead.}%
\else
\language=\csname l@#1\endcsname
\fi
#2}}
\providecommand{\BIBdecl}{\relax}
\BIBdecl

\bibitem{hung1997behavior}
H.~J. Hung, R.~T. O'Neill, P.~Bauer, and K.~Kohne, ``The behavior of the
  p-value when the alternative hypothesis is true,'' \emph{Biometrics}, pp.
  11--22, 1997.

\bibitem{sackrowitz1999p}
H.~Sackrowitz and E.~Samuel-Cahn, ``P values as random variables---expected p
  values,'' \emph{The American Statistician}, vol.~53, no.~4, pp. 326--331,
  1999.

\bibitem{gelman2006difference}
A.~Gelman and H.~Stern, ``The difference between ``significant'' and ``not
  significant'' is not itself statistically significant,'' \emph{The American
  Statistician}, vol.~60, no.~4, pp. 328--331, 2006.

\bibitem{johnson2013revised}
V.~E. Johnson, ``Revised standards for statistical evidence,''
  \emph{Proceedings of the National Academy of Sciences}, vol. 110, no.~48, pp.
  19\,313--19\,317, 2013.

\bibitem{open2015estimating}
O.~S. Collaboration \emph{et~al.}, ``Estimating the reproducibility of
  psychological science,'' \emph{Science}, vol. 349, no. 6251, p. aac4716,
  2015.

\end{thebibliography}

\end{document}